\newtheorem{theorem}{Theorem}[section]
\newtheorem{lemma}{Lemma}[section]
\newtheorem{corollary}{Corollary}[section]
\newcommand{\eps}{\epsilon}
\newcommand{\ceil}[1]{\lceil #1 \rceil}
\newcommand{\floor}[1]{\lfloor #1 \rfloor}
\newcommand{\brac}[1]{\left(#1\right)}
\begin{document}

\title{Streaming Edge Coloring with Subquadratic Palette Size}
\author{Shiri Chechik\thanks{Tel Aviv University, 		\href{}{shiri.chechik@gmail.com}}\and
				Doron Mukhtar\thanks{Tel Aviv University, 		\href{}{doron.muk@gmail.com}}\and
				Tianyi Zhang \thanks{Tel Aviv University, \href{}{tianyiz21@tauex.tau.ac.il}}}
\date{}

\maketitle

\begin{abstract}

In this paper, we study the problem of computing an edge-coloring in the (one-pass) W-streaming model. In this setting, the edges of an $n$-node graph arrive in an arbitrary order to a machine with a relatively small space, and the goal is to design an algorithm that outputs, as a stream, a proper coloring of the edges using the fewest possible number of colors.

Behnezhad et al. [Behnezhad et al., 2019] devised the first non-trivial algorithm for this problem, which computes in $\tilde{O}(n)$ space a proper $O(\Delta^2)$-coloring w.h.p. (here $\Delta$ is the maximum degree of the graph). Subsequent papers improved upon this result, where latest of them [Ansari et al., 2022] showed that it is possible to deterministically compute an $O(\Delta^2/s)$-coloring in $O(ns)$ space.
However, none of the improvements succeeded in reducing the number of colors to $O(\Delta^{2-\eps})$ while keeping the same space bound of $\tilde{O}(n)$\footnote{$\tilde{O}(f)$ hides $\log^{O(1)} f$ factors.}. In particular, no progress was made on the question of whether computing an $O(\Delta)$-coloring is possible with roughly $O(n)$ space, which was stated in [Behnezhad et al., 2019] to be an interesting open problem.

In this paper we bypass the quadratic bound by presenting a new randomized $\tilde{O}(n)$-space algorithm that uses $\tilde{O}(\Delta^{1.5})$ colors. 
\end{abstract}

\thispagestyle{empty}
\clearpage
\setcounter{page}{1}

\section{Introduction}


The last few decades have witnessed significant technological advancements, which have led to an exponential increase in the volume of data sets and network traffic that require efficient processing. However, many of the devices that we use to perform these tasks lack sufficient storage capacity to store even a small fraction of the input data, which typically arrives in an unordered stream. Consequently, we often find ourselves processing this data using partial information. This scenario is commonplace, especially when attempting to receive information from remote servers over the internet or fetch data from an external memory unit.

To gain a better understanding of computing capabilities in such scenarios, the \emph{data stream model} has been introduced. This model involves receiving an arbitrary stream of tokens as input, and the objective is to compute a function of this input by performing one or more passes over the stream while using minimal working memory. The model has been primarily applied to problems on graphs. In this context, the data stream consists of a sequence of updates that defines the edge-set of a graph with a known number of vertices, and the aim is to compute various properties of this graph.

Clearly, our aim in such problems is to design algorithms whose space complexity is asymptotically much less than the number of edges - preferably linear in the number of vertices. However, in some cases, the size of the output may be as large as the size of the input, forcing us to use a very large space to store it. To get around this, we use a known variant of the streaming model, called the W-steaming model \cite{wstream}, which allows us to stream parts of the output along the computation. 
However, it forces the algorithm to commit to parts of the output without even seeing the entire input.

Many important problems have been studied in the W-streaming model. In this paper we focus on the problem of properly coloring the edges of a given graph with a small number of colors. This problem is considered to be one of the most fundamental and well-studied problems in graph theory, with many applications in scheduling and communications.

Behnezhad et al. \cite{behnezhad2019streaming} were the first to consider the problem of edge-coloring in the W-streaming model. They distinguished between two variants: random order streams -- in which the edges arrive according to a random permutation that was chosen uniformly at random before the start of the algorithm, and adversarial order streams -- in which the edges arrive in an arbitrary order. For the first variant, they provided a simple one-pass algorithm that uses $O(\Delta)$ colors and $\tilde{O}(n)$ space (where as usual $\Delta$ and $n$ respectively denote the maximum degree of the input graph and the number of its vertices). For the second one, they provided a different one-pass algorithm that in $\tilde{O}(n)$ space computes w.h.p. a proper $O(\Delta^2)$-coloring. Charikar and Liu \cite{charikar2021improved} improved the above results by devising a one-pass $\tilde{O}(n)$-space algorithm that uses $\Delta + o(\Delta)$ colors for random order streams, and a one-pass randomized algorithm that uses $\tilde{O}(ns)$ space and $(1+o(1))\Delta^2/s$ colors for adversarial order streams. More recently, Ansari et al. \cite{ansari2022simple} provided two simple deterministic algorithms that in one pass and $O(ns)$ space compute a proper coloring that uses no more than $(1+o(1))\Delta^2/s$ colors for adversarial order streams, improving the result of \cite{charikar2021improved}.

Interestingly, for the more challenging model of adversarial order streams, all of the above results require $\tilde{\Omega}(\Delta^2)$ colors when the available space is $\tilde{O}(n)$. This raises the question of whether one can get asymptotically below this number of colors, while retaining the same space bound, or if $\tilde{\Omega}(\Delta^2)$ is an inherent limitation. In this paper we resolve this question, and show that it is possible to reduce the number of colors without compromising on increasing the given space.

\subsection{Our result}
We break the quadratic barrier for the first time by providing a new randomized algorithm that in one pass and $\tilde{O}(n)$-space (in expectation) computes a proper edge-coloring which uses, in expectation, no more than $\tilde{O}(\Delta^{1.5})$ colors (for adversarial order streams).

\begin{theorem}\label{subq}
	For an undirected (multi-)graph $G = (V, E)$ on $n$ vertices and maximum degree $\Delta$, there is a single-pass randomized streaming algorithm using $\tilde{O}(n)$ space for edge coloring that uses $\tilde{O}(\Delta^{1.5})$ colors with high probability.
\end{theorem}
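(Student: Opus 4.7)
My plan is to randomly partition the edge stream into $\tilde{\Theta}(\sqrt{\Delta})$ buckets and run a scaled-down version of the Behnezhad et al.\ procedure on each bucket with a private palette of $\tilde{O}(\Delta)$ colors; since the buckets use disjoint palettes, the total palette has size $\tilde{O}(\Delta^{1.5})$ as required. Concretely, I would fix a sufficiently independent hash function $h$ and route every arriving edge $e$ to bucket $h(e) \in [k]$ with $k = \lceil \sqrt{\Delta}/\log n \rceil$, and preallocate disjoint palettes $P_1, \dots, P_k \subseteq [\tilde{O}(\Delta^{1.5})]$ of size $C = \tilde{\Theta}(\Delta)$. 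A standard Chernoff-plus-union-bound argument shows that w.h.p.\ every vertex has only $\Delta_i = \tilde{O}(\sqrt{\Delta})$ edges in any fixed bucket $i$, placing each bucket in exactly the regime (max degree $\Delta_i$ with $\tilde{\Omega}(\Delta_i^2)$ available colors) where Behnezhad et al.'s algorithm provably succeeds.

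Within a bucket $i$, each vertex $v$ gets a random palette $\Psi_v^i \subseteq P_i$ of size $\tilde{\Theta}(\Delta_i)$, stored implicitly by an $O(\log n)$-bit hash seed. When an edge $(u,v)$ is routed to bucket $i$, I would search $\Psi_u^i \cap \Psi_v^i$ for a color unused on any earlier edge of $u$ or $v$ in bucket $i$. The intersection has expected size roughly $\Delta_i^2 / C$ times polylog factors, and since each endpoint blocks only $O(\Delta_i) = \tilde{O}(\sqrt{\Delta})$ colors of $P_i$, a Chernoff calculation mirroring Behnezhad et al.\ shows that an available color exists with probability $1 - n^{-\Omega(1)}$; a union bound over all $m \le n\Delta$ edges then yields correctness.

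The principal obstacle I foresee is the space bound. Naively running $k = \tilde{\Theta}(\sqrt{\Delta})$ parallel Behnezhad et al.\ instances as a black box costs $\tilde{O}(n\sqrt{\Delta})$, well above the target; even tracking all used colors per vertex explicitly would cost $\tilde{O}(m) = \tilde{O}(n\Delta)$. To push the space down to $\tilde{O}(n)$ in expectation, I plan to (i) represent all palettes implicitly through $O(\log n)$-bit hash seeds so that palette storage is already $\tilde{O}(n)$; (ii) maintain a single global dictionary of (vertex, bucket, used-color) entries, materialized only when they actually arise from processed edges; and (iii) establish an amortized or ``purge-old-entries'' argument showing that at any moment only $\tilde{O}(n)$ such entries need to be retained. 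Making this expected-space bound rigorous is the crux of the proof, and I expect it to require the most delicate probabilistic accounting, likely exploiting the fact that although a vertex participates in all $k$ buckets, its per-bucket edge count is only $\tilde{O}(\sqrt{\Delta})$ and it can be purged once it has exhausted its edges within a bucket.

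Putting everything together, the theorem would follow by a union bound over three high-probability events: per-bucket degree concentration, successful color discovery for every edge inside its bucket, and the $\tilde{O}(n)$ expected-space bound. Properness of the output is automatic because the buckets use disjoint palettes and the within-bucket procedure maintains the standard palette-intersection invariant preventing two adjacent edges from sharing a color.
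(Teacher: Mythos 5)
Your bucket-and-recolor plan gets the color count right on paper, but the space bound is not a loose end to be tightened later---it is where the proposal breaks, and the ``purge-old-entries'' idea cannot be made to work. To search $\Psi_u^i \cap \Psi_v^i$ for an unused color you must remember, for every vertex $v$ and every bucket $i$ that $v$ has touched, which colors of $P_i$ are already used on $v$'s edges in bucket $i$. Each edge contributes two such (vertex, bucket, color) records, so over the whole stream there are $\Theta(m) = \Theta(n\Delta)$ of them. Crucially, none of them can be discarded before the stream ends: once $v$ has used color $c$ in bucket $i$, that fact must be consulted for every later edge of $v$ that hashes to bucket $i$, and an adversarial stream can interleave $v$'s edges so that its last edge in bucket $i$ is the last edge of the stream. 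There is no moment at which a record becomes provably dead, so there is no amortized or purge-based argument, and the worst-case (and expected) space is $\Omega(n\Delta)$, not $\tilde O(n)$. Implicit hashing of the palettes $\Psi_v^i$ does not help because the set of \emph{used} colors depends on arrival order and on the other endpoint's choices, not only on the seed. (A secondary issue: running Behnezhad et al.\ per bucket ``as a black box'' needs an $O(n)$ buffer per bucket, i.e.\ $\tilde O(n\sqrt\Delta)$ space, which you already flag; but the replacement you propose inherits the record-keeping problem above.)

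The paper avoids exactly this obstacle by never keeping per-vertex color histories. It buffers $O(n)$-edge intervals, groups $\sqrt\Delta$ of them into a phase, and splits the work by a degree threshold $d$. Within a phase, only vertices that are high-degree \emph{in the current interval} carry state (a set $I_v$ of palette indices), and there are only $O(n/d)$ of them per interval, hence $O(n\sqrt\Delta/d)=O(n)$ over a phase when $d\ge\sqrt\Delta$. Low-degree vertices remember at most a single counter $c_u[i]$ (and the counters are globally $O(n)$ because each initialization is charged to a degree-$>\sqrt\Delta$ event in some interval). Color consistency across intervals is obtained structurally---random offsets $r_v$ plus the counters $c_u[i]$ and $p_l[i]$ carve out disjoint index ranges in the palettes---rather than by remembering and avoiding used colors; edges that would collide are deferred to a recursive virtual stream, and the expected leftover fraction is a constant $O(1/\kappa)$, giving $O(\log\Delta)$ levels. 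So the two approaches differ fundamentally: yours needs a history per (vertex, bucket), theirs is designed so that almost no history is needed. To salvage your route you would have to replace ``remember used colors'' with a stateless coordination mechanism, which is essentially the offset/counter machinery the paper introduces.
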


\subsection{Related work}

The edge-coloring problem has been studied in several models of computation (see, e.g. \cite{MISRA, HJKD, YQWJ, behnezhad2019streaming}). A closely related model to the W-Streaming is known as the Online model. In the online edge-coloring problem, we also assume that the edges of the input graph arrive as a data stream (which can be either random or arbitrary). There is no space limitation, but there is a requirement to instantly compute and output a color for each newly received edge, as opposed to the W-Streaming model in which we can delay the announcement of some of the edges by buffering them. See \cite{srinivasan2023online,kulkarni2022online,WaSA,SAWA,cohen2019tight} for some latest results on this problem.

\paragraph*{Independent work} In a concurrent work \cite{behnezhad2023streaming}, Behnezhad and Saneian have obtained the same result as \Cref{subq}; besides, they have a general trade-off of $\tilde{O}(ns)$ space and $\tilde{O}(\Delta^{1.5}/s)$ colors, for any parameter $s$. In another concurrent work \cite{ghosh2023low}, Ghosh and Stoeckl have achieved a trade-off of $\tilde{O}(ns)$ space and $\tilde{O}(\Delta^{2}/s^2)$ colors, for any parameter $s$.

\subsection{Technical Overview}

Ansari et al. \cite{ansari2022simple} used, in their second algorithm, a very simple buffering approach to color the graph's edges. Let $G = (V, E)$ be the input graph, $n$ be its number of vertices and $\Delta$ be its maximum degree. Assuming that we have a working space of at least $n$ words, we divide the edge-stream $E$ into continuous intervals $E_1,...,E_k$ of $n$ edges each, buffer each one of them in order, and color each graph $G_i = (V,E_i)$ separately by using a different set of $O(\Delta)$ colors. This gives us a proper $O(\Delta^2)$-coloring as there can be at most $\Delta$ such intervals.

To reduce the number of colors in such an approach, we have to avoid somehow the allocation of a new $O(\Delta)$-color palette for each interval. Our main observation here is that we may not be able to edge-color the graph of a given interval $G_i$ with much less than $\Omega(\Delta)$ colors (as it may contain vertices of high degree), but we can use fewer colors to color all the edges in $G_i$ that are between vertices of a sufficiently low vertex degrees. We will still have to use palettes of $O(\Delta)$ colors to color the rest of the edges, but these edges are now adjacent to high degree vertices. As the number of these vertices in each of the intervals cannot be too large, we can afford to store in the machine more information about the colorings that were used to color these edges across several intervals (which we call a phase). This way, it will hopefully be sufficient to allocate a few palettes of $O(\Delta)$-colors per phase, instead of allocating a new one for each interval.

Let us develop this idea further. We divide the intervals $E_1,...,E_k$ into contiguous phases $P_1,...,P_{k'}$ of $D$ intervals each. In each phase $i \in \{1,...,k'\}$, we are going to process the intervals one by one, and color their edges. To do that, we distinguish, in each interval $E_j \in P_i$,  between the ``low'' edges - those whose endpoints are of low degree (at most some parameter $t = \Theta(\sqrt{\Delta})$) in $G_j$, the ``high'' edges - those whose endpoints are of high degree (greater than $t$ in $G_j$) and the ``mixed'' - all the rest. As we discussed above, in each interval $E_j \in P_i$, the number of nodes whose degree in $G_j$ is greater than $t$, cannot be that large (less than $2n/t$). Thus, as long as $t \ge D$, we can store for the current phase, $O(1)$ words for each node and each interval such that this node is of degree greater than $t$ in this interval.

With that in mind, consider the following procedure for coloring the high edges. We allocate at the start of the phase, $c$ palettes $A_1,...,A_c$ of $O(\Delta)$ colors each. In each interval, we choose uniformly at random a set $A$ from the collection, and color all the high edges that are incident only to vertices that weren't previously incident to high edges that were colored using $A$. (that is, we store for each node that was incident to a high edge, the IDs of all the palettes that were used to color its incident edges.) Note that each vertex may have a high degree in no more than $\Delta/t$ intervals. Therefore, the probability that a high edge will not be colored in a certain interval is at most $2\Delta/tc$, which for $c = C(\Delta/t)$ is $2/C$. This means that in expectation only a fraction of the edges remains uncolored. Let us ignore these edges for a moment, and discuss how we color the rest of the edges.

Coloring the low edges of each interval is simple as we don't need big palettes for them. So we just allocate a new set of $O(t)$ colors for each interval, and properly color the low edges of that interval with it.

We move now for the task of coloring the mixed edges. Note that this case is more complicated than the previous ones, as we need to use palettes of $K = O(\Delta)$-colors, but cannot store too much information for the nodes with low degree. As before, we start by assigning at the start of the phase a set of $c = C(\Delta/t)$ palettes $B_1,...,B_c$ of $O(\Delta)$ colors each. In each interval, we choose uniformly at random a set $B$ from the collection, and the idea now is that the vertex with the low degree of each mixed edge will choose a color from $B$ to color this edge.

We illustrate the problems that could arise (when trying to color these edges) and the way we handle them, by focusing on two edges $\{u,v\}$ and $\{u,w\}$ that belong to two different intervals. Note that since these edges share a vertex, they cannot be colored with the same color. In the first case, $u$ has a high degree in both of the intervals. A possible conflict can occur if we happen to chose the same set $B$ in both of them. To solve this, we do the same thing that we did for the high edges (i.e. we store at the high degree vertices the IDs of the palettes with which previous incident edges were colored, and avoid coloring an edge if a conflict was detected). This will guarantee that in expectation only a fraction of them remains uncolored. In the second case, $u$ has a low degree in both of the intervals. This means that both $w$ and $v$ have a high degree in their intervals, but they may be different, and so we cannot know the IDs of the previous palettes. To solve this, we maintain for $u$ a counter $c_u$, which counts the total number of edges that connected $u$ to high degree vertices when it had low degree. When $u$ wants to choose a color from $B$ it simply takes the $c_u$ color in it, this will guarantee that no conflict can occur in such cases. In the third and last case, $u$ has a high degree in one of the intervals and a low in the other. For resolving possible conflicts in this case, we make use of random offsets. At the start of the phase, we choose for each vertex $v$ a uniformly random integer $r_v$ between $0$ and $K-1$, and use it to permute the colors in the set (that is, $v$ considers the first color in $B$ to be the color in the position $r_v$, and so on). Before we color an edge $\{u,v\}$ we check whether the offsets $r_v$ and $r_u$ are far enough. If this is not the case we skip it (this happens with small probability).

To summarize, in each phase we use $O(c\Delta + tD)$ colors. As we have $O(\Delta/D)$ phases, we get that the total number of colors is $O(\Delta^3/(tD) + t\Delta)$. The edges that were left uncolored, we treat as a new virtual stream, and apply recursively the algorithm on it. As in expectation the number of uncolored edges reduces each time by a constant factor, we get that it only increases the space bound and the number of colors by a factor of $O(\log\Delta)$ .

\section{Edge coloring with subquadratic palette size}
\subsection{Algorithm description}
\paragraph*{Preparation} Our algorithm assumes prior knowledge of the maximum degree $\Delta$. Without loss of generality, assume $\sqrt{\Delta}$ is an integer power of $2$. If the data stream contains at most $O(n)$ edges, then the entire stream fits in the memory, so we can color the graph with $O(\Delta)$ colors straightforwardly (more precisely, $\ceil{\frac{3\Delta}{2}}$ colors for multi-graphs, and $\Delta+1$ colors for simple graphs \cite{shannon1949theorem,vizing1965critical}).

The algorithm divides the data stream into intervals, each interval consisting of $n$ edges from $E$ (the last interval may have less than $n$ edges). In general, there are at most $|E| / n = O(\Delta)$ intervals in the data stream. Let $E_l\subseteq E$ be the set of edges in the $l$-th interval. Denote by $G_l = (V, E_l)$ the subgraph collected from this interval. We further partition all intervals into \emph{phases}, each phase consisting of $\Delta^{1/2}$ consecutive intervals (the last phase may have less). Different phases will use different sets of edge colors and thus are independent. For the rest of this section, we will describe our algorithm for an individual phase.

Fix any degree threshold $d$ be any integral power of $2$. If $d < \sqrt{\Delta}$, we will create $O(\sqrt{\Delta})$ new colors for each interval which should be enough to color all edges $(u, v)$ such that $$\max\{\deg_{G_l}(u), \deg_{G_l}(v)\} < 2d$$
Otherwise if $d\geq \sqrt{\Delta}$, then for each interval $E_l$, we will describe an algorithm that assigns colors to all edges $(u, v)$ such that $$\max\{\deg_{G_l}(u), \deg_{G_l}(v)\}\in [d, 2d)$$
Ranging over all choices of $d$, we would be able to color the whole graph $G_l$, by blowing up the total number of colors by a factor of $O(\log\Delta)$. 

A vertex $v\in V$ is called \emph{high-degree} in $G_l$, if $\deg_{G_l}(v) \in [d, 2d)$, and if $\deg_{G_l}(v) < d$, then it is called \emph{low-degree}; we will not consider any vertices whose degree in $G_l$ is at least $2d$.

Let $\kappa\geq 32$ be a constant which is an integer power of $2$. At the beginning of a single phase, the algorithm prepares three sequences of disjoint color palettes: $A_1, A_2, \cdots, A_{\kappa\Delta / d}$, and $B_1, B_2, \cdots, B_{\kappa\Delta / d}$, and $C_1, C_2, \cdots, C_{\kappa\Delta / d}$. All palettes $A_i, B_i, C_i$ have size $K =2\kappa d$.

For each palette $X$ ($X\in \{A_i, B_i, C_i\}$), its colors will be indexed by $0, 1, 2\cdots, K-1$. Each $A_i$ will only be used to color edges between high-degree vertices in some intervals, and each $B_i\cup C_i$ will only be used to color edges between high-degree vertices and low-degree vertices in some intervals. For each vertex $v\in V$, take a uniformly random integer offset $r_v\in [0, K-1]$. Additionally, for each palette index $i$, maintain a counter $c_v[i]\in \{\bot\}\cup [0, K-1]$ which will bound the number of edges incident on $v$ with colors from palette $C_i$; when $c_v[i]= \bot$, it means this counter $c_v[i]$ has not been initialized yet, so it does not take any space in the storage.

During this phase, each vertex $v$ holds a set of indices $I_v\subseteq [\kappa\Delta/d]$, such that $i\in I_v$ if edges incident on $v$ have been colored with colors from palette $A_i$ in any previous intervals. Finally, the algorithm maintains a \emph{virtual stream} of \emph{leftover} edges, which are edges discarded temporarily but will be processed again recursively.

\paragraph*{Processing intervals} Let $l_0$ be the index of the first interval of this phase. For each interval indexed by $l$, draw a random index $\sigma_l\in [\kappa\Delta /d]$ uniformly at random. For any $i$, keep a counter $p_l[i] = |\{\sigma_k = i\mid l_0\leq k< l\}|$. Note that all these counters $\{p_l[i]\}_{1\leq i\leq \kappa\Delta/d}$ can be stored using $O(\kappa\Delta / d) = O(n)$ space; for the $l$-th interval, we are only using counters $p_l[*]$, and previous counters will be discarded.

Partition the graph $G_l = (V, E_l)$ into two disjoint subgraphs: $G_l = H_l^1\cup H_l^2$ defined below; edges in these subgraphs will be colored separately.
\begin{enumerate}[(1),leftmargin=*]
	\item $H_l^1$ consists of all edges between high-degree vertices in $G_l$.	
	
	To color edges in $H_l^1$, let $U$ be the set of all high-degree vertices such that $\sigma_l\notin I_v$. Then, use the palette $A_{\sigma_l}$ to color the subgraph $H_l^1[U]$ since $|A_{\sigma_l}| = 2\kappa d > 4d-1$.
	
	After that, for each high-degree vertex $v$, add $\sigma_l$ to $I_v$ by updating $I_v\leftarrow I_v\cup \{\sigma_l\}$. Finally, for each high-degree vertex $v\notin U$, insert all its incident edges in $G_l$ to the virtual stream as leftover edges; we emphasize the point that this also includes its edges connecting to low-degree vertices.

	\item $H_l^2$ consists of all edges connecting high-degree vertices to low-degree vertices.
	
	To color edges in $H_l^2$, go over all low-degree vertices $u$ in the alphabetical order. For each low-degree vertex $u$, if $c_u[\sigma_l]=\bot$, then check if $\deg_{G_l}(u) > \sqrt{\Delta}$; if so, then initialize a counter $c_u[\sigma_l] \leftarrow 0$.
	
	Enumerate its incident edges in $H_l^2$  in the alphabetical order. For the $b$-th edge $(u, v)\in E_l$ incident on $u$ (starting with $b=0$), consider several cases below.
	\begin{enumerate}[(a),leftmargin=*]
		\item If $(u, v)$ was already inserted to the virtual stream in Step (1) due to the reason that $v\notin U$, then skip it and proceed to the next edge incident on $u$.
		
		\item Define $\delta = r_v - r_u \mod K$, so $d\in [0, K-1]$. If $\delta < 2d$ or $\delta>K-2d$, then insert edge $(u, v)$ to the virtual stream as a leftover edge.
		
		\item Otherwise, assume $2d\leq \delta\leq K-2d$. There are several sub-cases.
		\begin{itemize}[leftmargin=*]
			\item If $c_u[\sigma_l] = 2d$, then simply insert the edge $(u, v)$ to the virtual stream as a leftover edge.
			
			\item If $c_u[\sigma_l]\neq \bot$ and $c_u[\sigma_l] < 2d$, then check if the $(r_u + c_u[\sigma_l])$-th color from palette $C_{\sigma_l}$ has already been used in $H_l^2$ on any other edge incident on high-degree vertex $v$. If so, then insert the edge $(u, v)$ to the virtual stream as a leftover edge; otherwise, assign the $(r_u + c_u[\sigma_l])$-th color from palette $C_{\sigma_l}$ to edge $(u, v)$.
			
			\item Otherwise if $c_u[\sigma_l] = \bot$, check if $p_l[\sigma_l] \geq 2d / \sqrt{\Delta}$. If so, insert edge $(u, v)$ to the virtual stream as a leftover edge.
			
			In the case that $c_u[\sigma_l] = \bot$ and $p_l[\sigma_l] < 2d / \sqrt{\Delta}$, check if the $\brac{r_u + b + \sqrt{\Delta}\cdot p_l[\sigma_l]}$-th color from palette $B_{\sigma_l}$ has already been used in $H_l^2$ on any other edge incident on high-degree vertex $v$. If so, then insert the edge $(u, v)$ to the virtual stream as a leftover edge; otherwise, assign the $\brac{r_u + b + \sqrt{\Delta}\cdot p_l[\sigma_l]}$-th color from palette $B_{\sigma_l}$ to edge $(u, v)$.
		\end{itemize}
	\end{enumerate}
	In all cases (a)(b)(c), if $c_u[\sigma_l]\neq \bot$, increment the counter by one $c_u[\sigma_l]\leftarrow c_u[\sigma_l]+1$ afterwards (even if $(u, v)$ is inserted to the virtual stream).
\end{enumerate}

\paragraph*{Coloring leftover edges} To color all leftover edges in the virtual stream, we apply the above algorithm recursively on this virtual stream as its input data stream, with fresh colors and independent randomness. Some of the notations used in the algorithm description are summarized in \Cref{table:notation-summary}.
\begin{table}
	\centering
	\begin{tabular}{| c | c |}
		\hline 
		\textbf{notation} & \textbf{definition} \\ \hline
		$E_l$	&	the set of $O(n)$ edges in the $l$-th interval in from the data stream\\\hline
		
		$G_l$	&	$G_l = (V, E_l)$ is the subgraph collected in the $l$-th interval\\\hline
		
		phase	&	a phase consists of $\Delta^{1/2}$ consecutive intervals\\\hline
		
		$\kappa$	&	a constant at least $32$ which is also an integer power of $2$\\\hline
		
		$A_{i}, B_{i}, C_{i}, 1\leq i\leq \kappa\Delta / d$	&	three sequences of sets of palettes of size $K = 2\kappa d$\\\hline
		
		$c_v[i]\in \{\bot\}\cup [0, K-1]$	&	a counter bounding using colors from palette $C_i$\\\hline
		
		$r_v$	&	$r_v\in [0, K-1]$ is a random shift\\\hline
		
		$I_v$	&	$i\in I_v$ if $v$ already has edges with colors from $A_i$\\\hline
		
		$\sigma_l$	&	a uniformly random index from $[1, \kappa\Delta/d]$ for the $l$-th interval\\\hline
		
		$p_l[i]$	&	a counter defined by $p_l[i] = |\{\sigma_k = i\mid l_0\leq k< l\}|$\\\hline
		
		high-degree vertex	&	high-degree if $\deg_{G_l}\in [d, 2d)$\\\hline
		
		low-degree vertex	&	high-degree if $\deg_{G_l} <d$\\\hline
		
		$H_l^1$	&	all edges between high-degree vertices in $G_l$\\\hline
		
		$H_l^2$	&	all edges connecting high-degree vertices and low-degree vertices\\\hline
	\end{tabular}
	\caption{Some of the notations used in the algorithm.}
	\label{table:notation-summary}
\end{table}


\subsection{Proof of correctness}
Let us first state some basic properties of the algorithm.

\begin{lemma}\label{ind}
	The value of any counter $c_u[i]$ depends on the data stream and randomness of indices $\{\sigma_l\}$, but is independent of the random shifts $\{r_v\mid v\in V\}$.
\end{lemma}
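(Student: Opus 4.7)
The plan is to argue by straightforward induction on time (running over intervals, then low-degree vertices within an interval in alphabetical order, then edges incident on a given vertex in alphabetical order) that the value of every counter $c_u[i]$ at any moment in the execution is a deterministic function of the data stream together with the random palette indices $\{\sigma_l\}$, with no dependence on the random shifts $\{r_v\mid v\in V\}$.

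First I would catalog every place in the algorithm where a counter $c_u[i]$ can be created or modified. There are exactly two: an initialization step at the start of processing a low-degree vertex $u$ in interval $l$, which sets $c_u[\sigma_l]\leftarrow 0$ provided $c_u[\sigma_l]=\bot$ and $\deg_{G_l}(u)>\sqrt{\Delta}$; and an increment step that fires once after processing each edge $(u,v)\in H_l^2$ incident on $u$, which advances $c_u[\sigma_l]$ by one whenever $c_u[\sigma_l]\neq \bot$. The triggering conditions of both steps depend only on the stream (through $G_l$, $H_l^2$, and the alphabetical iteration order), on $\sigma_l$, and on the current counter value; none of them references any $r_v$. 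In particular, although the sub-cases of (b) and (c) make coloring decisions that do depend on the shifts, the blanket rule stated after the case analysis guarantees that the counter is incremented regardless of which sub-case applies and regardless of whether the edge was actually colored or deferred to the virtual stream.

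Finally I would formalize this as an induction on the triple (interval index, low-degree vertex, edge). The inductive hypothesis is that up to the current elementary step, every $c_u[i]$ and every index set $I_v$ is a deterministic function of the stream and $\{\sigma_l\}$ alone; the parallel observation for $I_v$ is that it is only updated in Step~(1) by adding $\sigma_l$ whenever $v$ is high-degree in $G_l$, which again does not consult $\{r_v\}$. The next elementary step either leaves all counters unchanged or triggers an initialization/increment whose firing and effect are, by the catalog above, measurable with respect to the stream, $\{\sigma_l\}$, and the current counters and sets $I_v$; the inductive step thus goes through. The only subtle point — and the reason the lemma is worth isolating — is the deliberate decoupling in the algorithm between the counter update and the coloring decisions: if the increment had instead been conditioned on the edge being actually colored, then $c_u[i]$ would implicitly encode whether $\delta=r_v-r_u\bmod K$ fell in the excluded window $[0,2d)\cup(K-2d,K-1]$, and independence from $\{r_v\}$ would break down. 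Highlighting this separation is essentially all the proof needs.
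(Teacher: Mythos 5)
Your proof is correct and takes essentially the same approach as the paper: both arguments observe that the counter's initialization condition and its per-edge increment rule reference only the stream, the degree thresholds, and the palette indices $\{\sigma_l\}$, and both hinge on the fact that the increment fires after every enumerated edge of $H_l^2$ regardless of which sub-case applied or whether the edge was deferred. Your closing remark about why the algorithm decouples the counter update from the coloring decision is a nice articulation of the point the paper's proof relies on implicitly.
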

\begin{proof}
	According to the description of the algorithm, each counter $c_u[i]$ is initialized to $0$ in the first interval of edges $E_k$ such that $\deg_{G_k}(u) > \sqrt{\Delta}$ and $\sigma_k = i$; this event is independent of random shifts $\{r_v\mid v\in V\}$. Later on, for any $l\geq k$ such that $\sigma_l = i$, $c_u[i]$ will increase by $|\deg_{H^2_l}(u)|$ if $u$ is a low-degree vertex in $G_l$. Therefore, $c_u[i]$ is always independent of the random shifts $\{r_v\mid v\in V\}$.
\end{proof}

\begin{lemma}\label{C-range}
	At the beginning of the $l$-th interval, for any low-degree vertex $u$, if $c_u[\sigma_l]\neq \bot$, then all colors from any palette $C_{\sigma_l}$ indexed by $r_u + c_u[\sigma_l], r_v + c_u[\sigma_l]+1, \cdots, r_u+  2d-1\mod K$ haven't been used for any edge incident on $u$.
\end{lemma}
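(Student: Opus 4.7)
The plan is to directly enumerate the ways in which a color from palette $C_{\sigma_l}$ could have ended up on an edge incident to $u$ in some earlier interval $k < l$, and to verify in each case that the color index falls outside the claimed range $\{r_u + c_u[\sigma_l], r_u + c_u[\sigma_l]+1, \ldots, r_u + 2d - 1\} \mod K$. Inspecting the algorithm, a color from $C_i$ is assigned only in the second sub-case of step (c) of Step (2), which requires $\sigma_k = i$ and uses the $(r_x + c_x[\sigma_k])$-th color of $C_i$, where $x$ is the low-degree endpoint currently being iterated over and $c_x[\sigma_k] < 2d$ at the moment of assignment. Hence any prior use of $C_{\sigma_l}$ on an edge incident to $u$ forces $\sigma_k = \sigma_l$ and splits cleanly into two subcases: (i) $u$ itself is the low-degree endpoint $x$ in $G_k$, or (ii) $u$ is the high-degree endpoint and the low-degree endpoint $x$ is some other vertex $w$.

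For subcase (i), I would use that the counter $c_u[\sigma_l]$ is only ever incremented and never decreased, so its value at the moment of assignment is strictly smaller than its value at the start of interval $l$. Therefore the assigned index lies in $\{r_u, r_u+1, \ldots, r_u + c_u[\sigma_l] - 1\} \mod K$, which is disjoint from the claimed range.

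For subcase (ii), I would invoke the guard in step (b), which, before any color from $C_{\sigma_k}$ is assigned to the edge $(w, u)$ from $w$'s side, verifies that $\delta = r_u - r_w \mod K$ lies in $[2d, K - 2d]$; otherwise the edge is deferred to the virtual stream. A short modular computation using $c_w[\sigma_k] \in [0, 2d)$ then shows that the assigned index satisfies $(r_w + c_w[\sigma_k]) - r_u \equiv c_w[\sigma_k] - \delta \in [2d, K - 1] \mod K$, which is again disjoint from the claimed range since $c_u[\sigma_l] < 2d \leq 2d$. Combining both subcases yields the lemma, and no induction on $l$ is actually needed once the structural classification of uses is in place.

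The main obstacle I expect is subcase (ii), precisely because $u$ has no local information about $r_w$ or $c_w[\sigma_l]$ at the time colors are chosen from the other side; this is exactly what the random shifts $r_v$ together with the $\delta$-guard in step (b) are introduced to handle, and the correctness of the argument rests on carefully tracking the signs and keeping the relevant intervals within a single copy of $[0, K-1]$ when reducing modulo $K$.
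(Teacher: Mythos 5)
Your proof is correct, and it is actually more careful than the paper's own. The paper's proof of \Cref{C-range} is a one-line remark that ``the counter $c_u[\sigma_l]$ increases by one each time we use a color from $C_{\sigma_l}$ on edges incident on $u$,'' which only addresses your subcase (i), namely edges where $u$ was the \emph{low-degree} endpoint choosing the color $r_u + c_u[\sigma_l]$. Read literally, though, the lemma claims that the indices $r_u + c_u[\sigma_l],\dots,r_u+2d-1 \pmod K$ are unused on \emph{any} edge incident to $u$, and $c_u[\sigma_l]$ does not increment when $u$ is the high-degree side of an edge colored by some low-degree neighbor $w$ in an earlier interval with the same $\sigma$-index. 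That situation (your subcase (ii)) is genuinely possible — a vertex can be low-degree in $G_l$ with $c_u[\sigma_l]\neq\bot$ and high-degree in an earlier $G_k$ with $\sigma_k=\sigma_l$ — so the extra case you treat is not vacuous. Your handling of it via the $\delta$-guard in Step (2)(b) is sound: with $\delta=r_u-r_w\bmod K\in[2d,K-2d]$ and $c_w[\sigma_k]\in[0,2d)$, the assigned index sits at $c_w[\sigma_k]-\delta\in[2d,K-1]$ relative to $r_u$, disjoint from the claimed window inside $[0,2d-1]$. This is in fact the same modular-offset argument the paper deploys later inside the correctness lemma for the mixed high/low case; you have simply pulled it forward so that \Cref{C-range} is true exactly as stated. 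One cosmetic nit: the closing ``since $c_u[\sigma_l]<2d\leq 2d$'' should just say that the claimed window is a subinterval of $[0,2d-1]$ (relative to $r_u$), which the computed range $[2d,K-1]$ misses entirely.
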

\begin{proof}
	This is because the counter $c_u[\sigma_l]$ increases by one each time we use a color from $C_{\sigma_l}$ on edges incident on $u$, so the colors indexed by larger integers have not been used yet.
\end{proof}

\begin{lemma}\label{full}
	For each vertex $u$, at any moment during the current phase, there are at most $\frac{\Delta}{2d}$ different indices $i$ such that $$\sum_{l, \sigma_l = i}\deg_{G_l}(u) \geq 2d$$
\end{lemma}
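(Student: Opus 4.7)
The plan is to prove this by a simple double-counting/pigeonhole argument based on the observation that the intervals within a phase partition (a subset of) the edges incident on $u$ in $G$. Since every edge of $G$ belongs to exactly one interval $E_l$, we have
\[
	\sum_{l \text{ in the current phase}} \deg_{G_l}(u) \;\leq\; \deg_G(u) \;\leq\; \Delta.
\]
Grouping the intervals of the current phase by the value of $\sigma_l$, this bound rewrites as
\[
	\sum_{i=1}^{\kappa\Delta/d} \Bigl(\sum_{l,\ \sigma_l = i} \deg_{G_l}(u)\Bigr) \;\leq\; \Delta.
\]

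Now suppose for contradiction that at some moment during the phase, strictly more than $\Delta/(2d)$ distinct indices $i$ satisfy $\sum_{l,\sigma_l=i}\deg_{G_l}(u)\geq 2d$. Since the quantities inside each parenthesis are nonnegative and the sums over those indices alone already exceed $\frac{\Delta}{2d}\cdot 2d = \Delta$, this contradicts the inequality above. Hence at most $\Delta/(2d)$ such indices can exist, which is exactly the claim. The only mild subtlety is that the statement must hold \emph{at any moment} during the phase, not just at the end, but this is automatic because partial sums over a prefix of intervals can only be smaller than the full sum, so the set of indices meeting the threshold $2d$ can only grow during the phase and is bounded at the end by $\Delta/(2d)$. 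There is no real obstacle here; the lemma is a pure counting statement and does not interact with the random choices $\sigma_l$ or $r_v$ at all.
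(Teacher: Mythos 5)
Your proof is correct and takes exactly the same pigeonhole/double-counting approach as the paper, just spelled out in more detail; the paper's one-line argument is precisely that $\deg_G(u)\leq\Delta$ forces the number of indices whose partial degree sum reaches $2d$ to be at most $\Delta/(2d)$. Your remark about ``at any moment'' being handled by monotonicity of prefix sums is a fair clarification the paper leaves implicit.
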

\begin{proof}
	Since $\deg_{G}(u)\leq \Delta$, the number of indices $i$ such that $\sum_{l, \sigma_l = i}\deg_{G_l}(u)\geq 2d$ is bounded by $\frac{\Delta}{2d}$.
\end{proof}

Next, we show that our algorithm always produces a valid edge coloring.
\begin{lemma}
	The algorithm always outputs a valid edge coloring of $G$.
\end{lemma}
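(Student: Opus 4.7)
The plan is to verify, within each phase (phases use disjoint palette sets) and at each level of the recursion (recursive calls use fresh randomness and disjoint colors), that no two adjacent edges receive the same color; the overall coloring is then valid by induction on recursion depth applied to the leftover edges pushed to the virtual stream. Within a phase, the palettes $\{A_i\}, \{B_i\}, \{C_i\}$ are pairwise disjoint across types and indices, so it suffices to rule out collisions within each individual palette $A_i$, $B_i$, or $C_i$.

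For palette $A_i$, within a single interval $l$ with $\sigma_l = i$ the coloring of $H_l^1[U]$ is produced by Vizing/Shannon on a subgraph of maximum degree at most $2d-1$, which is proper because $|A_i| = 2\kappa d$ exceeds the Shannon multigraph bound. Across two intervals $l_1 < l_2$ with $\sigma_{l_1} = \sigma_{l_2} = i$, any vertex appearing high-degree in both would have had $i$ inserted into $I_v$ during $l_1$, forcing $v \notin U$ at $l_2$ so that step~1 pushes all its incident edges in $G_{l_2}$ to the virtual stream; hence no $A_i$ collision is possible.

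For palettes $B_i$ and $C_i$, used only on $H_l^2$ edges with low-degree endpoint $u$ and high-degree endpoint $v$, I case-split on the shared vertex of two potentially conflicting edges. Within a single interval the algorithm's explicit check at the high-degree endpoint rules out collisions there, while at the low-degree endpoint $u$ the algorithm commits to a single palette ($B_i$ if $c_u[\sigma_l]$ stays $\bot$, $C_i$ otherwise) and indexes into it with strictly increasing quantities ($b$ in the $B_i$ case, $c_u[\sigma_l]$ in the $C_i$ case). For cross-interval conflicts with $\sigma_{l_1} = \sigma_{l_2} = i$: if the shared vertex is high-degree in both intervals, the $I_v$ mechanism of step~1 again pushes its edges from the later interval to the virtual stream; if it is low-degree in both, distinct intervals have distinct $p_l[i]$, so the $B_i$-index blocks $[r_u + \sqrt{\Delta}\cdot p_l[i],\ r_u + \sqrt{\Delta}(p_l[i]+1) - 1]$ used by $u$ are disjoint across intervals, while for $C_i$ the counter $c_u[i]$ increments monotonically, so \Cref{C-range} guarantees that fresh indices are chosen.

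The main obstacle is the mixed cross-interval case, where the shared vertex is high-degree at one interval $l_1$ and low-degree at the other $l_2$. Here the colliding color indices are anchored at $r_{u_1}$ (when $v$ is high at $l_1$, with $u_1$ low) and at $r_v$ (when $v$ is low at $l_2$), each within a window of width $2d$ in $B_i\cup C_i$ (by the bounds $b<\sqrt{\Delta}$, $p_l[i]<2d/\sqrt{\Delta}$, and $c_u[i]<2d$); a collision would therefore force $r_v - r_{u_1} \bmod K \in [0, 2d-1] \cup [K - 2d + 1, K - 1]$. But case~(b) of step~2 at interval $l_1$ refuses to color $(u_1, v)$ unless $r_v - r_{u_1} \bmod K \in [2d, K - 2d]$, which is disjoint from the collision range. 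Exhausting these cases together with the within-interval checks and the recursive handling of leftover edges yields the claim.
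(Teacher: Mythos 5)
Your proof is correct and follows essentially the same approach as the paper's: case analysis on whether the two adjacent edges lie in the same or different intervals, whether they use the $A$, $B$, or $C$ palettes, and whether the shared vertex is high- or low-degree, with the $I_v$ mechanism, the counter/position monotonicity, the disjoint $p_l[i]$ blocks, and the offset-separation check of Step~(2)(b) each ruling out collisions in the corresponding case. Your version is slightly more explicit than the paper's in observing that the $2d$-window property holds for colors drawn from $B_i$ as well as from $C_i$ (the paper cites only its $C$-range lemma for this), but the underlying argument is the same.
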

\begin{proof}
	Consider any two adjacent edges $(u, v), (u, w)\in E$, and we will show that $(u, v), (u, w)$ must have different colors in the output stream. Note that $v$ does not need to be different from $w$ since $G$ might not be a simple graph. If one of them appears in the virtual stream, then because the recursion for leftover edges uses fresh colors for each interval, we can make sure that $(u, v), (u, w)$ are assigned different colors. For the rest we assume that $(u, v), (u, w)$ are not leftover edges.
	
	First, consider the case where both $(u, v), (u, w)$ appear in the same interval from the input stream. According to the algorithm, if they are colored in different Step (1) and (2) respectively, then they are picking colors from separate palettes; if they are colored in the same Step (1) or (2), then the algorithm must have colored them in a compatible way.
	
	Secondly, consider the case where $(u, v), (u, w)$ appear in different intervals from the input stream. Assume $(u, v)$ is in interval $E_l$ and $(u, w)$ is in interval $E_h, h\neq l$. Consider the following cases.
	\begin{itemize}[leftmargin=*]
		\item One of $(u, v), (u, w)$ is colored in Step (1) and the other in Step (2).
		
		In this case, they are using different palettes, so their colors are always different.
		
		\item Both $(u, v), (u, w)$ are colored in Step (1) but in different intervals.
		
		In this case, since $(u, w)$ is not a leftover edge, by the algorithm we know that $\sigma_l\neq \sigma_h$. Therefore, $(u, v), (u, w)$ use colors from different palettes $A_{\sigma_l}, A_{\sigma_h}$.
		
		\item Both $(u, v), (u, w)$ are colored in Step (2) but in different intervals.
		
		
		If $u$ is high-degree in both intervals $E_l, E_h$, then since neither of $(u, v), (u, w)$ is  a leftover edge, it must be $\sigma_l \neq \sigma_h$. Hence, $(u, v), (u, w)$ are taking colors from different palettes.
		
		Next, assume $u$ is low-degree in both intervals, plus that $\sigma_l = \sigma_h$. If $(u, v), (u, w)$ were selecting colors from $B_{\sigma_l}, C_{\sigma_h}$, then their colors must be different. So, we only need to consider the case where both $(u, v), (u, w)$ were selecting colors from $B_{\sigma_l}$, or from $C_{\sigma_l}$. Let us analyze these two cases separately.
		
		\begin{itemize}[leftmargin=*]
			\item $(u, v), (u, w)$ were selecting colors from $B_{\sigma_l}$.
			Then, by the algorithm description, $(u, v)$ could only take colors from $B_{\sigma_l}$ taking indices from the range
			$$[r_u + \sqrt{\Delta}\cdot p_l[\sigma_l], r_u+\sqrt{\Delta}\cdot \brac{p_l[\sigma_l]+1}-1] \mod K$$
			and $(u, w)$ could only take colors from $B_{\sigma_l}$ taking indices from the range
			$$[r_u + \sqrt{\Delta}\cdot p_h[\sigma_h], r_u+\sqrt{\Delta}\cdot \brac{p_h[\sigma_h]+1}-1] \mod K$$ 
			
			As $l\neq h$ and $\sigma_l = \sigma_h$, we know that $p_l[\sigma_l]\neq p_h[\sigma_h]$, and therefore the two ranges $$[r_u + \sqrt{\Delta}\cdot p_l[\sigma_l], r_u+\sqrt{\Delta}\cdot \brac{p_l[\sigma_l]+1}-1]\mod K$$ and $$[r_u + \sqrt{\Delta}\cdot p_h[\sigma_h], r_u+\sqrt{\Delta}\cdot \brac{p_h[\sigma_h]+1}-1]\mod K$$ are disjoint, and thus the colors of the two edges must be different.
			
			\item $(u, v), (u, w)$ were selecting colors from $C_{\sigma_l}$.
			
			Suppose the color indices of $(u, v), (u, w)$ were $r_u+ c_u^{(1)}[\sigma_l]$ and $r_u + c_u^{(2)}[\sigma_l]$, where $c_u^{(1)}[\sigma_l]\neq c_u^{(2)}[\sigma_l]$ are the counter values of $c_u[\sigma_l]$ by the time when the color was selected. Hence, $(u, v), (u, w)$ are picking colors with different indices.
		\end{itemize}
	
		Finally, assume $u$ is high-degree in $G_l$ and low-degree in $G_h$. Then, when coloring the edge $(u, v)$, by the condition on Step (2)(b), we know that $\delta = r_u - r_v\mod K$ belongs to $[2d, K-2d]$. By \Cref{C-range}, as the color index of $(u, v)$ always belongs to $\{r_v, r_v+1, \cdots, r_v+2d-1 \}$ and the color index of $(u, w)$ always belongs to $\{r_u, r_u+1, \cdots, r_u+2d-1 \}$, the colors of $(u, v), (u, w)$ should be different as the two index sets are disjoint.
	\end{itemize}
\end{proof}

Next, we verify that the total space of each recursion level is at most $O(n)$.
\begin{lemma}\label{mem}
	The space usage of each recursion level of the algorithm is $O(n)$.
\end{lemma}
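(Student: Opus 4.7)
I will argue the bound phase by phase for a fixed degree threshold $d$; state associated with one phase can be discarded before the next phase starts, and the $O(\log\Delta)$ choices of $d$ (each contributing $O(n)$) together with the recursion depth combine to $\tilde O(n)$. The per-phase storage splits into (i) a batch of easy items that are manifestly $O(n)$, and (ii) two potentially unbounded tables, the index sets $\{I_v\}_v$ and the counter arrays $\{c_v[\cdot]\}_v$, whose aggregate sizes are the only nontrivial thing to control.

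For the easy items, the random shift $r_v$ is one word per vertex, so $O(n)$ in total. The palettes $A_i, B_i, C_i$ are not stored explicitly: only their identity and color-index ranges are needed, so at $O(\kappa\Delta/d)$ palette indices with $d\ge \sqrt\Delta$ this is $O(\sqrt\Delta) = O(n)$. The counter table $p_l[\cdot]$ has one entry per palette index, also $O(\sqrt\Delta)$; earlier $p_{l'}$ are overwritten. Within a single interval we must additionally buffer $|E_l|\le n$ edges and remember, for each high-degree vertex currently being processed in Step~(2), which color indices have already been handed out to its $H_l^2$-edges; since there are $n$ edges per interval, this is $O(n)$ transient state that is flushed at the end of the interval. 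Finally, leftover edges are pushed to the virtual stream for the next recursion level and never stored in the current level.

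The main point is to bound $\sum_v |I_v|$ and $\sum_v |\{i : c_v[i]\ne\bot\}|$ over one phase. The set $I_v$ acquires a new element only when $v$ is high-degree in some interval $E_l$ of the phase, and $c_v[i]$ leaves $\bot$ only when $v$ is low-degree in some interval $E_l$ with $\sigma_l=i$ and $\deg_{G_l}(v)>\sqrt\Delta$. In a single interval, since $|E_l|\le n$, a double-counting of edge endpoints gives at most $2n/d$ vertices with degree at least $d$ and at most $2n/\sqrt\Delta$ vertices with degree greater than $\sqrt\Delta$. A phase has $\sqrt\Delta$ intervals, so summing these bounds yields $\sum_v |I_v|\le \sqrt\Delta\cdot 2n/d \le 2n$ (using $d\ge\sqrt\Delta$) and $\sum_v |\{i:c_v[i]\ne\bot\}|\le \sqrt\Delta\cdot 2n/\sqrt\Delta = 2n$. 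These two tallies are the hardest step, and the only one that needs the specific choice of phase length; everything else is bookkeeping. Combining with the easy items gives $O(n)$ space per phase, hence per recursion level for this $d$. The case $d<\sqrt\Delta$ is handled by the simple per-interval $O(\sqrt\Delta)$-palette, which only needs to store the current interval, so $O(n)$ is trivial there.
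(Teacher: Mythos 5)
Your proposal is correct and follows essentially the same argument as the paper: both bound $\sum_v |I_v|$ by dividing the $O(n\sqrt\Delta)$ total degree of a phase by the threshold $d\ge\sqrt\Delta$, and both bound the number of initialized counters $c_v[\cdot]$ by dividing the same total degree by the threshold $\sqrt\Delta$ used to trigger initialization. Your version is slightly more explicit about the ancillary per-interval state and the $O(\log\Delta)$ factor from ranging over $d$, but the core counting is identical.
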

\begin{proof}
	It is clear that in each phase, the total space of all offsets is at most $O(n)$. For the number of counters in a single phase, every time a new counter $c_u[\sigma_l]$ has been initialized, we must have $\deg_{G_l}(u) > \sqrt{\Delta}$ for some interval index $l$ within this phase. Since the total sum of vertex degrees within this phase is bounded by $O(n\sqrt{\Delta})$, we know that the number of counters is bounded by $O(n)$.
	
	As for the total space sets $I_v$'s, in each interval, any set $I_v$ adds one more element if $v$ is high-degree. As the total number of edges collected from a single phase is $O(n\sqrt{\Delta})$, The total size of $\sum_{v\in V}|I_v|$ would be at most $O(n\sqrt{\Delta} / d) = O(n)$.
\end{proof}

To bound the total number different colors in $G$, we need to analyze the total number of leftover edges of each recursion level.
\begin{lemma}\label{leftover}
	Suppose there are $m$ edges from the input stream (across all phases) in total. Then, the expected number of leftover edges to enter the virtual stream is at most $7m/\kappa$, over the randomness of shifts $\{r_v\}_{v\in V}$ and indices $\{\sigma_l\}$.
\end{lemma}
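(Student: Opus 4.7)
The plan is to enumerate the six events that cause an edge $(u,v)\in E_l$ to enter the virtual stream, and then bound the total expected number of leftover edges by summing per-edge probability bounds via linearity of expectation. Reading off the algorithm, every leftover edge is produced by exactly one of the following events: (E1) in Step~(1), a high-degree endpoint $v$ (or $u$) of $(u,v)$ satisfies $\sigma_l \in I_v$; (E2) in Step~(2)(b), the offset difference $\delta = r_v - r_u \bmod K$ lies in $[0, 2d)\cup (K-2d, K)$; (E3) in the first sub-case of Step~(2)(c), $c_u[\sigma_l] = 2d$; (E4) in the second sub-case, the intended color from $C_{\sigma_l}$ has already been used on another edge at $v$ in $H_l^2$; (E5) in the third sub-case, $c_u[\sigma_l] = \bot$ and $p_l[\sigma_l] \geq 2d/\sqrt{\Delta}$; (E6) in the fourth sub-case, the intended color from $B_{\sigma_l}$ has already been used on another edge at $v$ in $H_l^2$. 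The target per-edge probability bounds are $2/\kappa$, $2/\kappa$, $1/(2\kappa)$, $1/\kappa$, $1/(2\kappa)$, and $1/\kappa$ respectively, summing to $7/\kappa$.

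For (E1), $|I_v|$ grows by at most one each time $v$ is high-degree in an interval, and the total number of intervals in the phase where $v$ is high-degree is at most $\Delta/d$ since $\deg_G(v) \leq \Delta$ and a high-degree interval contributes at least $d$ to this sum; thus deterministically $|I_v| \leq \Delta/d$. Because $\sigma_l$ is drawn uniformly from $[\kappa\Delta/d]$ independently of $I_u, I_v$ at the start of $E_l$, a union bound yields $P[\sigma_l \in I_u \cup I_v] \leq 2/\kappa$. For (E2), since $r_u, r_v$ are independent and uniform over $[0, K-1]$, the difference $\delta$ is uniform and the bad range has size $4d = 2K/\kappa$, so the probability is $2/\kappa$. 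For (E5), $p_l[\sigma_l]$ has expectation at most $\sqrt{\Delta}\cdot d/(\kappa\Delta) = d/(\kappa\sqrt{\Delta})$ since a phase has at most $\sqrt{\Delta}$ intervals and each prior $\sigma_k$ agrees with $\sigma_l$ with probability $d/(\kappa\Delta)$; Markov's inequality with threshold $2d/\sqrt{\Delta}$ yields $P \leq 1/(2\kappa)$.

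For the two color-collision events (E4) and (E6), I exploit the uniform random shift $r_u$: in both sub-cases the chosen color index takes the form $r_u + X \bmod K$ where $X$ is a function of the data stream, of $\{\sigma_k\}$, and of $\{r_w : w\neq u\}$, all of which are independent of $r_u$. Thus conditioning on everything other than $r_u$, the color index is uniform over $\{0, \ldots, K-1\}$ and independent of the at most $\deg_{G_l}(v) - 1 < 2d - 1$ indices already consumed on $H_l^2$ at $v$ that we must avoid; a union bound gives a collision probability of at most $(2d-1)/K < 1/\kappa$ in each case. Event (E3) is the main obstacle, and for it I consider the ``full bucket'' set $B_u^{(l)} = \{i : c_u[i] \geq 2d \text{ at the start of } E_l\}$, which by \Cref{full} has size at most $\Delta/(2d)$ deterministically. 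By \Cref{ind}, $B_u^{(l)}$ is a function of the data and of $\sigma_1,\ldots,\sigma_{l-1}$, hence independent of the freshly drawn $\sigma_l$; therefore $P[\sigma_l \in B_u^{(l)}] \leq (\Delta/(2d))/(\kappa\Delta/d) = 1/(2\kappa)$. Summing the six per-edge bounds and applying linearity of expectation over all $m$ edges in the input stream yields the claimed bound $7m/\kappa$.
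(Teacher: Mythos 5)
Your proposal is correct and takes essentially the same route as the paper: the same six-way case split with the same per-case bounds ($2/\kappa, 2/\kappa, 1/(2\kappa), 1/\kappa, 1/(2\kappa), 1/\kappa$), the same use of \Cref{full} and \Cref{ind} for the $c_u[\sigma_l]=2d$ event, the same Markov argument for $p_l[\sigma_l]$, and the same ``$r_u$ is fresh and independent of everything else'' argument for the two palette-collision events. The only stylistic difference is that for the collision events you state the independence abstractly (the attempted color index is $r_u + X$ with $X$ and the already-used set at $v$ both measurable with respect to the data, $\{\sigma_k\}$, and $\{r_w : w\neq u\}$), whereas the paper constructs the explicit index sets $I$ and $J$ of already-consumed colors and restricts to neighbors $w$ alphabetically before $u$, which also makes the parallel-edge case transparent; your version is fine but would benefit from a sentence noting that prior parallel copies $(u,v)$ take distinct offsets from $r_u$ and so cannot collide with the current attempt.
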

\begin{proof}
	Consider any single phase, and let $(u, v)\in E$ be an arbitrary edge that appears in interval $E_l$. There are several cases where $(u, v)$ could possibly be a leftover edge that gets inserted to the virtual stream.
	\begin{itemize}[leftmargin=*]
		\item Edge $(u, v)$ becomes leftover on Step (1).
		
		The condition for $(u, v)$ being cast as a leftover edge is that $\sigma_l\in I_u\cup I_v$; that is, palette $A_{\sigma_l}$ was already used for edges incident on $u$ or $v$ in a previous interval. Since any vertex can be high-degree in at most $\Delta /d$ intervals, the number of non-zero entries of $I_u, I_v$ is at most $2\Delta/d$. As $\sigma_l\in [\kappa\Delta/d]$ is uniformly random, the probability that $\sigma_l\in I_u\cup I_v$ is at most $2/\kappa$, over the randomness of $\sigma_l$.
		
		\item Edge $(u, v)$ becomes leftover on Step (2) because $c_u[\sigma_l] = 2d$.
		
		By \Cref{full}, there are at most $\frac{\Delta}{2d}$ different indices $i$ such that $\sum_{k=l_0, \sigma_k = i}^{l-1}\deg_{G_k}(v)\geq 2d$; let $I$ be the set of these indices. Then, since $\sigma_l$ is picked uniformly at random, the probability that $\sigma_l\in I$ is at most $\frac{1}{2\kappa}$.
		
		As $c_u[i]\leq \sum_{k=l_0, \sigma_k = i}^{l-1}\deg_{G_k}(v)$ for any $i$, the probability that $c_u[\sigma_l] = 2d$ is at most $\frac{1}{2\kappa}$. Therefore, $(u, v)$ becomes leftover on this step with probability at most $\frac{1}{2\kappa}$.
		
		\item Edge $(u, v)$ becomes leftover on Step (2)(b).
		
		In this case, we have $\delta < 2d$ or $\delta>K-2d$ for $\delta = r_v - r_u\mod K$. Since $r_v, r_u$ were chosen uniformly at random from $[0, K-1]$, the probability of this event is at most $4d / K < 2 / \kappa$, over the randomness of $r_v$.
		
		\item Edge $(u, v)$ becomes leftover on Step (2)(c) due to $p_l[\sigma_l] \geq 2d / \sqrt{\Delta}$.

		Note that since each $\sigma_k$ was picked uniformly at random, the expectation of $p_l[\sigma_l]$ which counts $|\{\sigma_k = \sigma_l \mid l_0\leq k<l\}|$ is bounded by $\frac{d}{\kappa\sqrt{\Delta}}$, over the randomness of $\sigma_{l_0}, \sigma_{l_0+1}, \cdots, \sigma_{l-1}$. By Markov's inequality, the probability that $p_l[\sigma_l] \geq 2d / \sqrt{\Delta}$ is at most $\frac{1}{2\kappa}$.
		
		\item Edge $(u, v)$ becomes leftover on Step (2)(c) and $p_l[\sigma_l] < 2d / \sqrt{\Delta}$.
		
		In this case, assume $u$ is low-degree and $v$ is high-degree in interval $E_l$. We need to consider two more sub-cases below.
		
		\begin{itemize}[leftmargin=*]
			\item $c_u[\sigma_l] \neq \bot$.
			
			In this case, edge $(u, v)$ was attempting to use a color from $C_{\sigma_l}$. Right before $(u, v)$ was enumerated, let $S$ be the set of low-degree neighbors $w$ of $v$ whose alphabetical orders are before $u$, plus that $c_w[\sigma_l]\neq \bot$.
			
			For each vertex $w\in S$, suppose there are $k_w$ parallel edges between $v, w$, and let $b_w$ be the value of the counter $c_w[\sigma_l]$ when the first edge $(u, w)$ was enumerated from the perspective of $w$. Define an index set $$I = \bigcup_{w\in S}\{r_w+b_w, r_w + b_w+1, \cdots, r_w + b_w + k_w-1 \}$$
			
			Note that $|I|< 2d$ since $\sum_{w\in S}k_w\leq \deg_{G_l}(v)<2d$. As the algorithm enumerates vertices and edges on Step (2)(c) in the alphabetical order, we know that $I$ is the set of all possible color indices of edges $(v, w), w\in S$ in palette $C_{\sigma_l}$.
			
			Now, on the one hand, by \Cref{ind}, all values of $b_w, k_w$ are independent of the random shifts $\{r_z\mid z\in V\}$. Therefore, as $u\notin S$, we know that $r_u$ is independent of $I$. On the other hand, for $(u, v)$ to be a leftover edge, $r_u + c_u[\sigma_l]$ must belong to $I$. As the value $c_u[\sigma_l]$ is also independent of $r_u$, the probability that $r_u + c_u[\sigma_l]\in I$ is at most $|I| / K < 1/\kappa$, over the random choice of $r_u$.
			
			\item $c_u[\sigma_l] = \bot$.
			
			In this case, edge $(u, v)$ was attempting to use a color from $B_{\sigma_l}$. Similar to the previous sub-case, right before $(u, v)$ was enumerated, let $T$ be the set of low-degree neighbors $w$ of $v$ whose alphabetical orders are before $u$, plus that $c_w[\sigma_l] = \bot$.
			
			For each vertex $w\in T$, suppose there are $k_w$ parallel edges between $u, w$, and $(u, w)$ (the first copy) is the $b_w$-th edge incident on $w$. Define an index set 
			$$J = \bigcup_{w\in T}\{r_w+b_w+\sqrt{\Delta}\cdot p_l[\sigma_l], r_w + b_w+\sqrt{\Delta}\cdot p_l[\sigma_l]+1, \cdots, r_w + b_w+\sqrt{\Delta}\cdot p_l[\sigma_l] + k_w-1 \}$$
			
			Note that $|J|< 2d$ since $\sum_{w\in T}k_w\leq \deg_{G_l}(v)<2d$. As the algorithm enumerates vertices and edges on Step (2)(c) in the alphabetical order, we know that $J$ is the set of all possible color indices of edges $(v, w), w\in T$ in palette $B_{\sigma_l}$.
			
			Suppose $(u, v)$ is the $b$-th edge of $u$ in $G_l$, so $r_u+b+\sqrt{\Delta}\cdot p_l[\sigma_l]$ is the color index that $(u, v)$ attempted to use in $B_{\sigma_l}$. Since $u\notin T$, and that $r_u$ is independent of $p_l[\sigma_l]$ and all values in $\{b_w\mid w\in T\}$, we know that the probability that $r_u + b+\sqrt{\Delta}\cdot p_l[\sigma_l]\in J$ is at most $|J| / K < 1/\kappa$, over the random choice of $r_u$.
		\end{itemize}
	\end{itemize}

	Taking a summation of all the cases, the probability that $(u, v)$ becomes a leftover edge is at most $7/\kappa$. Therefore, expected number of leftover edges is bounded by $7m/\kappa$.
\end{proof}

\begin{lemma}\label{exp-depth}
	The expected recursion depth of the main algorithm is $O(\log\Delta)$.
\end{lemma}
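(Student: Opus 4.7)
The plan is to exploit Lemma~\ref{leftover} to show that the expected number of edges shrinks by a constant factor at each recursion level, and then convert this into a bound on the expected depth via Markov's inequality and a geometric-tail argument.

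First I would set up notation: let $M_0$ be the number of edges in the original input stream, so $M_0 \leq n\Delta / 2$, and let $M_k$ denote the (random) number of edges present in the virtual stream entering the $k$-th recursion level. Recall that when the number of edges at a recursion level drops to $O(n)$, the algorithm stops recursing and colors the remaining stream directly (as noted at the start of the algorithm description). So the recursion depth $T$ satisfies $T \leq k$ whenever $M_k \leq cn$ for the appropriate constant $c$.

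Next I would apply Lemma~\ref{leftover} iteratively. Conditioned on $M_k$, the expected number of leftover edges is at most $(7/\kappa)\, M_k$, and since $\kappa \geq 32$, this ratio $\rho := 7/\kappa \leq 7/32 < 1$. By the tower rule, $\Exp[M_k] \leq \rho^k M_0 \leq \rho^k \cdot n\Delta/2$. Then Markov's inequality gives
\[
\Pr[M_k > cn] \leq \frac{\Exp[M_k]}{cn} \leq \frac{\rho^k \Delta}{2c}.
\]

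Finally, to convert this into an expected-depth bound, I would write
\[
\Exp[T] = \sum_{k\geq 0} \Pr[T > k] \leq \sum_{k\geq 0} \min\!\left(1,\ \frac{\rho^k \Delta}{2c}\right).
\]
The first $k^* = O(\log \Delta)$ terms contribute at most $k^*$, and for $k > k^*$ the summand decays geometrically, contributing an additional $O(1)$. Hence $\Exp[T] = O(\log \Delta)$, as required.

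The only mild subtlety is making sure Lemma~\ref{leftover} continues to apply at every recursion level: the virtual stream inherits a maximum degree no larger than $\Delta$ (indeed it is a subgraph of $G$), and every recursive call uses fresh independent randomness for shifts and indices, so the per-level shrinkage factor $7/\kappa$ is valid at every level. Beyond this bookkeeping step, the argument is essentially a routine Markov-plus-geometric-tail calculation, so I do not expect any serious obstacle.
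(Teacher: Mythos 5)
Your proposal is correct and relies on the same key ingredients as the paper: \Cref{leftover} for the per-level shrinkage factor $7/\kappa$, Markov's inequality, and the stopping rule that the recursion terminates once the stream fits in $O(n)$ space. The paper's own argument is terser: it applies Markov per level to assert that with probability at least $1/2$ the edge count halves, and then asserts an $O(\log\Delta)$ expected depth, leaving an implicit negative-binomial-style step unspoken. Your route via the tower rule $\Exp[M_k]\leq (7/\kappa)^k M_0$, Markov on $\Pr[M_k>cn]$, and the tail-sum identity $\Exp[T]=\sum_{k\geq 0}\Pr[T>k]$ is a slightly different bookkeeping of the same idea, and is in fact the more carefully justified of the two. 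You also correctly flag and dispatch the one subtlety that the paper glosses over — that \Cref{leftover} applies at every recursion level because the virtual stream is a subgraph of $G$ with maximum degree at most $\Delta$ and each level draws fresh independent randomness.
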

\begin{proof}
	Consider any recursion where the input stream contains $m$ edges. By Markov's inequality and \Cref{leftover}, at each recursion level, with probability at least $1/2$, the number of leftover edges is at most $14m/\kappa < m/2$. Since the original graph $G$ contains $O(n\Delta)$ edges, after $O(\log\Delta)$ recursion levels in expectation, the input has at most $O(n)$ edges, so the algorithm would not recurse further.
\end{proof}

As a corollary, we can bound the total number of different colors and total memory, which concludes the proof of \Cref{subq}.
\begin{corollary}
	The number of colors used by our algorithm is $\tilde{O}\brac{\Delta^{1.5}}$, and the total memory is bounded by $O(n\log\Delta)$; both bounds hold in expectation.
\end{corollary}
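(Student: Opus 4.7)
The plan is to derive both bounds by tallying the per-recursion-level cost and then multiplying by the expected recursion depth from \Cref{exp-depth}.

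For the color count at a single recursion level, fix a degree threshold $d$. Within one phase of $\sqrt{\Delta}$ intervals, the algorithm allocates three palette sequences $A_i, B_i, C_i$ for $i \in [\kappa\Delta/d]$, each palette of size $K = 2\kappa d$, contributing $3 \cdot (\kappa\Delta/d)\cdot 2\kappa d = O(\Delta)$ fresh colors per phase. The number of phases per recursion level is $O(\Delta/\sqrt{\Delta}) = O(\sqrt{\Delta})$, so the colors used across all phases for this particular $d$ amount to $O(\Delta^{1.5})$. For the low-degree case $d < \sqrt{\Delta}$, we also need the $O(\sqrt{\Delta})$ fresh colors per interval spent on edges with both endpoints of degree below $2d$; over $O(\Delta)$ intervals this contributes $O(\Delta^{1.5})$ colors as well. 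Summing over $O(\log\Delta)$ dyadic choices of $d$ yields $\tilde{O}(\Delta^{1.5})$ colors per recursion level.

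Since each recursion level uses disjoint fresh colors, the total number of colors is the per-level cost times the number of levels. By \Cref{exp-depth}, the expected recursion depth is $O(\log\Delta)$, so by linearity of expectation the overall color count is in expectation $\tilde{O}(\Delta^{1.5})\cdot O(\log\Delta) = \tilde{O}(\Delta^{1.5})$. The memory bound follows similarly: by \Cref{mem} each recursion level uses $O(n)$ space, and since recursive invocations are processed sequentially on the virtual stream (so storage can be reused, but one still needs to hold state for the current recursion chain), the total memory in expectation is bounded by $O(n)\cdot O(\log\Delta) = O(n\log\Delta)$.

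The only mildly delicate point is that both the per-level color count and the depth are random quantities (depths depends on how many leftover edges survive each level). Here the per-level palette allocation is a deterministic function of the input size to that level, and since each level's input is at most $O(n\Delta)$ edges, the per-level color cost is deterministically at most $\tilde{O}(\Delta^{1.5})$; thus linearity of expectation over the depth yields the claim. I do not expect any serious obstacle in this argument beyond carefully justifying that the color budget really does multiply with the depth rather than with the total number of leftover edges, which is handled precisely because fresh color sets are allocated per recursion level and never reused.
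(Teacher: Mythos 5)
Your proof is correct and follows essentially the same route as the paper: tally $O(\Delta)$ colors per phase per dyadic $d\geq\sqrt{\Delta}$, $O(\sqrt{\Delta})$ colors per interval for $d<\sqrt{\Delta}$, multiply by $O(\sqrt{\Delta})$ phases to get $\tilde{O}(\Delta^{1.5})$ per level, then multiply by the $O(\log\Delta)$ expected recursion depth from \Cref{exp-depth} and invoke \Cref{mem} for memory. Your explicit remark that the per-level palette budget is a deterministic $\tilde{O}(\Delta^{1.5})$ bound (so that multiplying by the expected depth is legitimate rather than an incorrect product-of-expectations step) is a detail the paper leaves implicit, but it is the same argument.
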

\begin{proof}
	Consider any recursion level. According to the algorithm, in each phase and each choice of $d\geq \sqrt{\Delta}$ which is an integer power of $2$, the total number of colors in the palettes $\{A_i\}\cup \{B_i\}\cup \{C_i\}$ is $O(\Delta)$. Also, each interval creates at most $O(\sqrt{\Delta})$ new colors when $d<\sqrt{\Delta}$. Since there are $O(\sqrt{\Delta})$ phases, the number of colors used for processing intervals is $\tilde{O}(\Delta^{1.5})$, summing over all choices of $d$. As the expected recursion depth is $O(\log\Delta)$, together with \Cref{mem} we can finish the proof.
\end{proof}

\subsection{Adaptation to an unknown $\Delta$}
So far we have assumed that the value of the maximum degree $\Delta$ of $G$ is known to the algorithm as prior knowledge. We can adapt our algorithm to an unknown $\Delta$ by losing a constant factor in the total number of colors in the following way. Basically, we will maintain the value $\Delta_t$ which is the maximum degree of the subgraph containing the first $t$ edges in the data stream. Whenever $\Delta_t\in (2^{k-1}, 2^{k}]$, we will apply \Cref{subq} with $\Delta = 2^k$ to color all the edges. If $k$ increases at some point, we will restart a new instance of \Cref{subq} with a new choice of $\Delta = 2^k$ and continue to color the edges with a separate palette; to clarify, when a new instance of \Cref{subq} is restarted, we continue with the current pass of the data stream, not starting over with a new pass. In the end, the total number of colors will be $\tilde{O}(\sum_{k=1}^{\ceil{\log\Delta}}2^{1.5k}) = \tilde{O}(\Delta^{1.5})$ in expectation.

\subsection{From expectation to high probability}
So far our bounds on memory and the number of colors only hold in expectation, not with high probability. These bounds can actually hold with high probability rather than in expectation. In fact, as shown in \Cref{exp-depth}, the expected recursion depth can be bounded by $O(\log\Delta)$. We can also show that the depth is $O(\log n)$ with high probability, and therefore the total number of colors is at most $O(\Delta^{1.5}\cdot\log n)$. To get rid of the dependency on $\log n$, we can increase the size of intervals by a factor of $\log n$; that is, each interval now contains $O(n\log n)$ edges. This would decreases the total number of phases by a factor of $\Omega(\log n)$ and thus decreases the number of colors, which cancels out the extra $\log n$ factor in the color bound.

\section*{Acknowledgment}
This publication is part of a project that has received funding from the European Research Council (ERC) under the European Union’s Horizon 2020 research and innovation programme (grant agreement No 803118 UncertainENV).

\vspace{5mm}
\bibliographystyle{alpha}
\bibliography{ref}


\end{document}